%% file: main.tex
\documentclass[conference]{IEEEtran}
\IEEEoverridecommandlockouts
\usepackage{cite}
\usepackage{amsmath,amssymb,amsfonts}
\usepackage{algorithmic}
\usepackage{algorithm}
\usepackage{graphicx}
\usepackage[caption=false,font=footnotesize]{subfig}
\usepackage{textcomp}
\usepackage[dvipsnames,svgnames,table]{xcolor}
\usepackage{booktabs}
\usepackage{makecell}
\newcommand{\methodname}{SHADE}
\newcommand{\method}{\textbf{\texttt{\methodname}}}
\newcommand{\gray}[1]{\textcolor{gray}{#1}}
\newcommand{\ket}[1]{\lvert #1\rangle}
\newcommand{\bra}[1]{\langle #1\rvert}
\newcommand{\ketbra}[2]{\lvert #1\rangle\langle #2\rvert}
\newcommand{\venue}[1]{{\scriptsize\textcolor{gray!70!black}{[#1]}}}
\colorlet{bestcell}{red!13}
\colorlet{secondcell}{blue!10}
\newcommand{\best}[1]{\cellcolor{bestcell}\textbf{#1}}
\newcommand{\second}[1]{\cellcolor{secondcell}\textbf{#1}}
\DeclareRobustCommand{\swatch}[1]{{\setlength{\fboxsep}{0pt}\setlength{\fboxrule}{0.4pt}\raisebox{-0.2ex}{\fcolorbox{gray!80}{#1}{\rule{0pt}{1.8ex}\hspace{5ex}}}}}
\newcommand{\band}[1]{\gray{\textit{#1}}}
\definecolor{hdrA}{HTML}{8A1F3D}
\definecolor{hdrB}{HTML}{16479B}
\definecolor{hdrC}{HTML}{15683A}
\definecolor{hdrD}{HTML}{4D4D4D}
\newcommand{\hd}[2]{\cellcolor{#1}\textcolor{white}{\textbf{#2}}}
\definecolor{oursname}{HTML}{FBE5D0}

\newcommand{\std}[1]{$_{\pm\text{\scriptsize #1}}$}
\usepackage{tcolorbox}
\usepackage{ntheorem}
\theoremseparator{:}

\newtheorem{proposition}{Proposition}
\newtheorem{theorem}{Theorem}
\newtheorem{corollary}{Corollary}
\theoremstyle{nonumberplain}
\theorembodyfont{\normalfont}
\newtheorem{proof}{Proof}
\usepackage{hyperref}
\usepackage{balance}

\def\BibTeX{{\rm B\kern-.05em{\sc i\kern-.025em b}\kern-.08em
    T\kern-.1667em\lower.7ex\hbox{E}\kern-.125emX}}
\begin{document}

\title{Federated Unlearning of Quantum Data via Classical Shadows}

\author{
\IEEEauthorblockN{Zihao Ding\textsuperscript{1}, Jun Huang\textsuperscript{1}, and Cong-cong Xing\textsuperscript{2}}
\IEEEauthorblockA{\textsuperscript{1}Department of Electrical Engineering and Computer Science, South Dakota State University, Brookings, SD 57007, USA\\
\textsuperscript{2}Department of Mathematics and Computer Science, Nicholls State University, Thibodaux, LA 70310, USA\\
Email: zihao.ding@jacks.sdstate.edu, jun.huang@sdstate.edu, cong-cong.xing@nicholls.edu}
}

\maketitle

\begin{abstract}
Quantum federated learning trains a quantum neural network (QNN) across nodes that hold quantum data, such as the states of quantum sensors. To forget data, unlearning methods for quantum models replay them, and federated unlearning from stored updates removes whole nodes only. Quantum data cannot be replayed, since a state is consumed when measured and cannot be cloned. We propose \underline{Sha}dow-based \underline{D}ata \underline{E}rasure (\method{}), in which each node records a classical shadow of every training state, a few randomized single-qubit measurements taken at training time. The gradient and curvature of a QNN on a state are expectations of observables, so \method{} estimates them for the forget data from the shadows at any later parameters and removes their effect by a federated Newton correction. Only classical estimates cross the network. We prove that the shadow size needed for a target unlearning error grows exponentially in the size of the backward light cone of the measured observable and not in the number of qubits. In simulation with 50 nodes and on edge boards, \method{} matches the forget accuracy of retraining within 0.5 points, without replay and in under 0.2~s.
\end{abstract}

\begin{IEEEkeywords}
Quantum federated learning, machine unlearning, classical shadows, quantum neural networks, quantum sensor networks.
\end{IEEEkeywords}

\input{section/1_introduction}
\input{section/3_systemAndproblem}
\input{section/4_method}
\input{section/5_experiments}
\input{section/6_conclusion}

\balance
\input{main.bbl}

\end{document}

%% file: section/1_introduction.tex
\section{Introduction}
\label{sec:intro}

Quantum sensors and small quantum processors produce data that are quantum states, such as the probe states of a sensor network or the output states of a quantum simulator~\cite{Degen2017RMP}. In a quantum sensor network, for instance, each node holds the probe states of its own field and classifies the field from them~\cite{Zhang2021QST}. Sending these states to one site needs quantum links, which are short and lossy~\cite{Wehner2018Science}. Quantum federated learning (QFL) therefore trains a quantum neural network (QNN)~\cite{Cong2019NatPhys}, a parameterized circuit shared by all nodes, where the states are produced, and it exchanges only classical gradients~\cite{Chen2021Entropy,Chehimi2022ICASSP}. Networked systems~\cite{Wu2026COMST,Wu2026IoTJ,Wu2026TON,Wu2026Network,Ding2025IPCCC,Ding2026TAAS,Ding2026ICNC,Huang2020TWC,Huang2019WCM,Che2027PAKDD} keep learning after deployment~\cite{Wu2026TNSE,Wu2026ICDCSAlpha,Wu2026ARXIV1,Wu2026ARXIVPRISM,Wu2026arXivRouting,Wu2026arXivREBASE,Wu2026arXivCrystalMem}, and they must also be able to forget~\cite{Bourtoule2021SP,Ding2026MNET,Ding2026ICDCS,Pudasaini2026HPSRAgri,Ding2026ARXIVEASE,Ding2026arXivMUTE,Che2026arXiv}. A node may withdraw its data, or a sensor may turn out to have drifted for a period, and its states must then stop shaping the shared circuit.

Unlearning methods for quantum models evaluate the model on the forget data again. Gradient ascent, distillation, and pruning by the quantum Fisher information all measure the QNN on the forget states~\cite{Su2026PRApplied,Kurmanji2023NIPS}, and certified removal needs their gradient and curvature at the current parameters~\cite{Guo2020ICML}. Federated unlearning from stored updates avoids new measurements~\cite{Liu2021IWQoS,ZhangL2023TIFS}, but it removes whole nodes and cannot remove a window of time or a subset of samples. A drifted sensor, however, spoils only the states of one period, while the other states of its node remain sound. These methods share the assumption that the training data can be accessed again. Quantum data break this assumption. A state is consumed when it is measured, the no-cloning theorem forbids keeping a copy~\cite{Wootters1982Nature}, and a sensor cannot re-create a field that has since changed. What a node keeps from training are the outcomes of parameter-shift measurements~\cite{Schuld2019PRA}, and each outcome holds only for the parameters of its round, as Fig.~\ref{fig:record} shows.

The structure of a QNN, however, makes deletion without replay possible. The gradient and the curvature of a QNN on a state are expectation values of observables, linear in the state. A classical shadow, the record of random single-qubit Pauli measurements on copies of a state, predicts expectation values of observables chosen after the measurement~\cite{Huang2020NatPhys}. A node that records shadows of its states during training can thus estimate, at any later parameters, the gradient and curvature that the forget data contribute. The observable also sets the cost of this record. Under a local measurement, the observables of a shallow circuit act only on the backward light cone of the measured qubits~\cite{Cerezo2021NatCommun}, as Fig.~\ref{fig:cone} shows, so the record grows with the light cone and not with the number of qubits.

Motivated by these insights, we design Shadow-based Data Erasure (\method{}). Each node splits the copies of every state between training and shadow snapshots, and it keeps the snapshots on the light cone only. When a request arrives, the nodes estimate the gradient of the forget data and the curvature of the retained data from their snapshots at the current model. The server applies one Newton correction, and the snapshots of the forget data are erased. Only classical estimates cross the network, and no state is prepared again. The record spends a few training copies per state on the ability to forget it later.

Our contributions are summarized as follows:
\begin{itemize}
\item We formulate forgetting in QFL when forget states cannot be re-measured, as choosing the copies recorded per state so that the correction meets a given error target.
\item We propose \method{}, which records classical shadows on the light cone in training and undoes the forget data by a federated Newton step built from these shadows.
\item We prove that the shadow estimates are unbiased at any parameters and that the snapshots needed per state grow exponentially in the size of the light cone and not in the number of qubits, and we bound the parameter distance between the corrected and the retrained model.
\item In simulation with 50 nodes and on edge boards, \method{} matches the forget accuracy of retraining within 0.5 points, without replaying states and in under 0.2~s.
\end{itemize}

%% file: section/3_systemAndproblem.tex
\section{System Model and Problem Formulation}
\label{sec:system}

\begin{figure}[t]
  \centering
  \includegraphics[width=0.95\columnwidth]{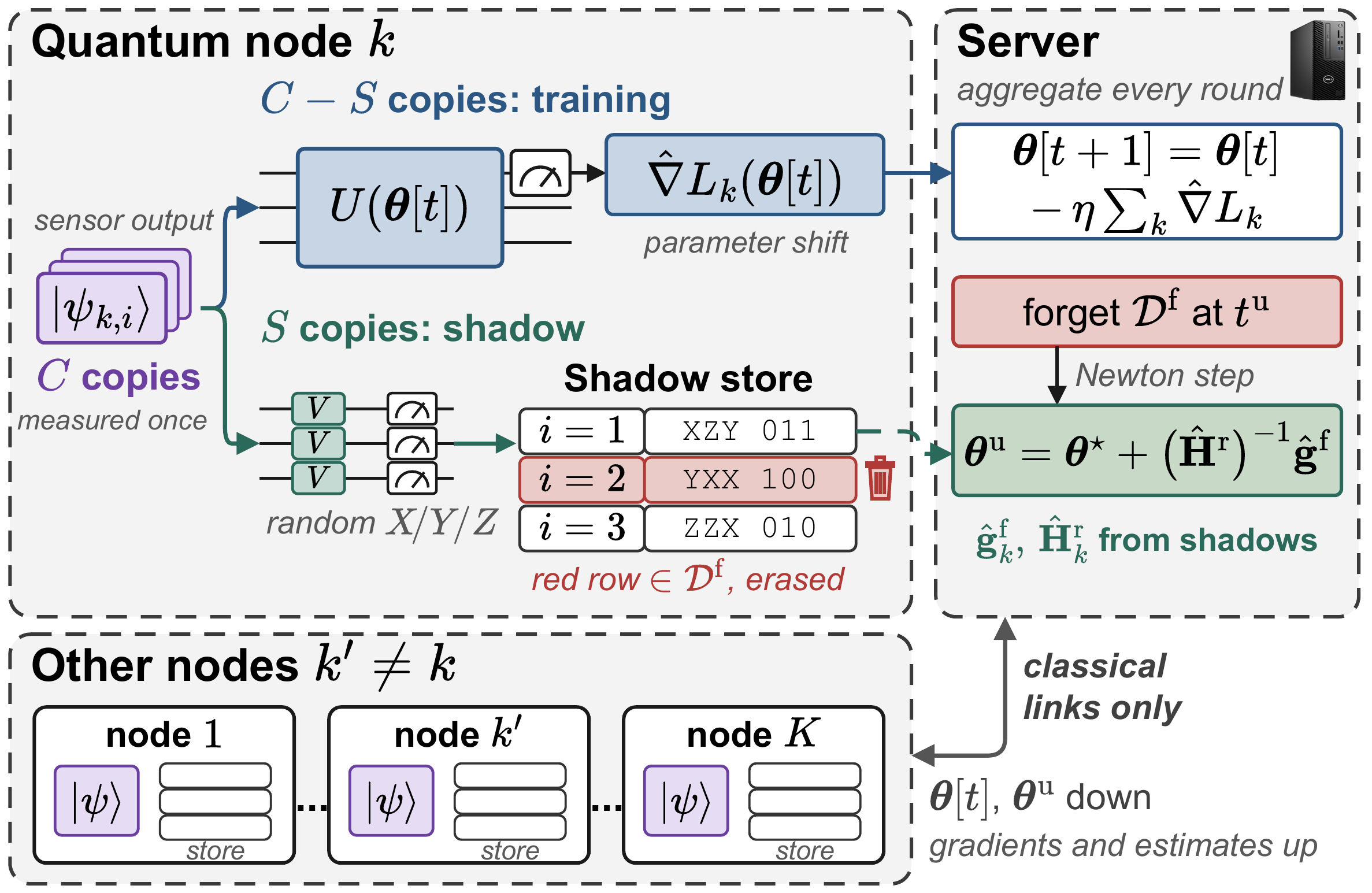}
  \caption{System model of federated unlearning on quantum data.}
  \label{fig:system}
\end{figure}

\begin{figure*}[t]
  \centering
  \begin{minipage}[t]{0.48\textwidth}
    \centering
    \includegraphics[width=\linewidth]{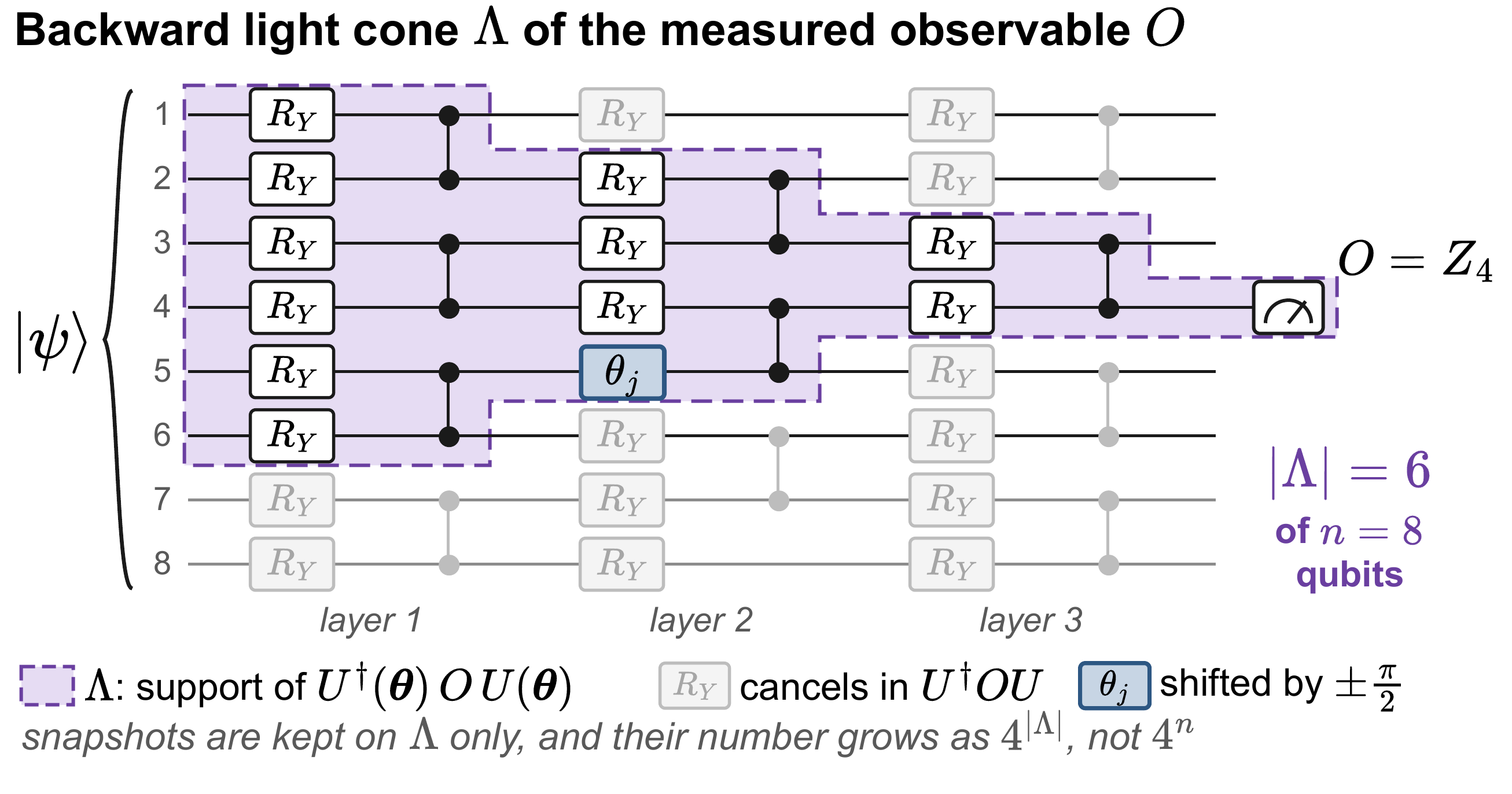}
    \caption{Backward light cone of the measured observable in a layered QNN.}
    \label{fig:cone}
  \end{minipage}
  \hfill
  \begin{minipage}[t]{0.48\textwidth}
    \centering
    \includegraphics[width=\linewidth]{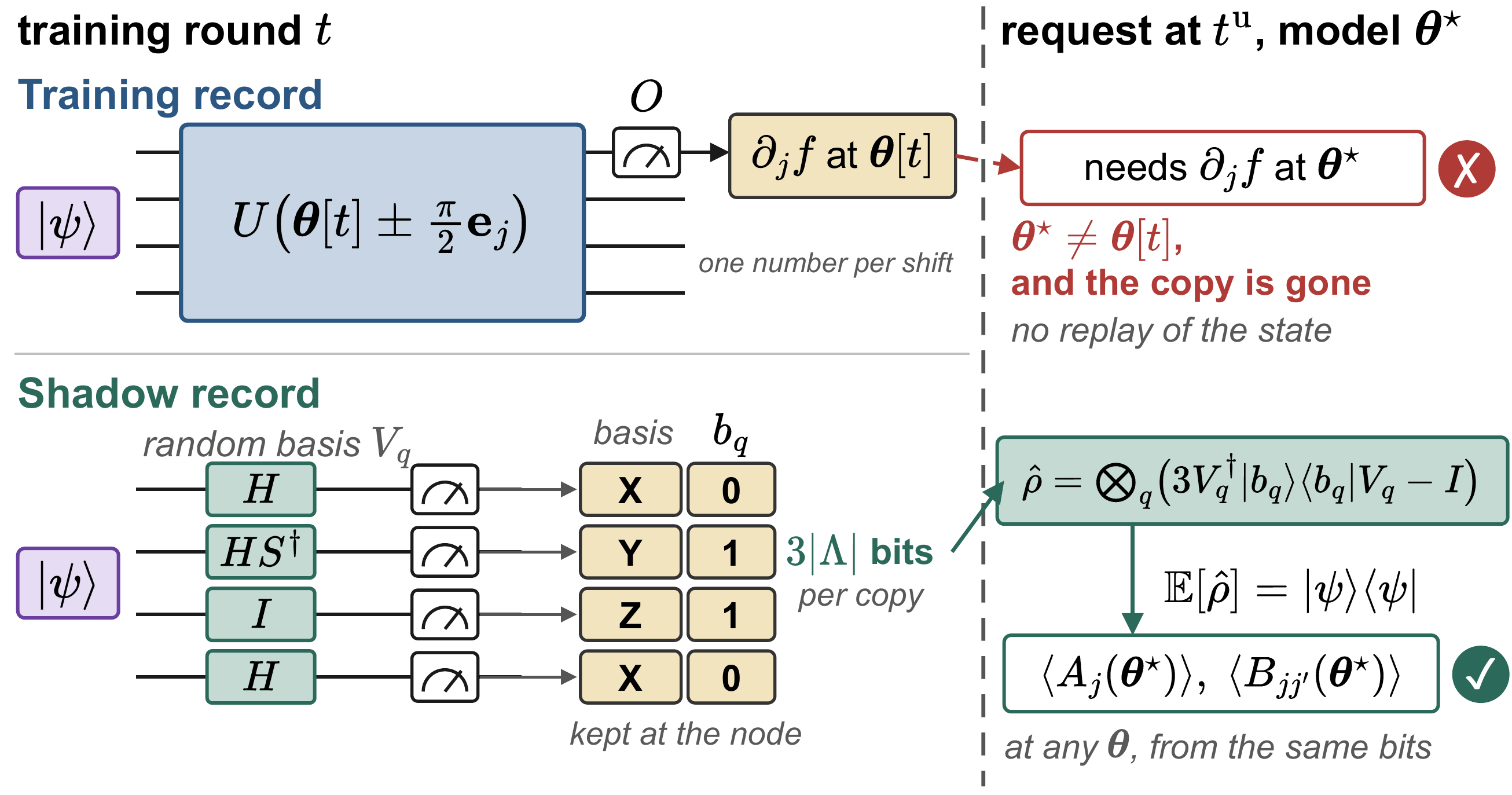}
    \caption{Parameter-shift records fix the parameters, shadows do not.}
    \label{fig:record}
  \end{minipage}
\end{figure*}

\subsection{System Model}
\label{subsec:system-model}

We consider a server and $K$ quantum nodes connected by classical links, as illustrated in Fig.~\ref{fig:system}. Node $k$ holds a dataset $\mathcal{D}_k$ of labeled states $(\ket{\psi_{k,i}},y_{k,i})$ with $y_{k,i}\in\{-1,+1\}$, produced by its sensor. We let $\mathcal{D}=\bigcup_k\mathcal{D}_k$ denote the union of all local datasets. State $i$ of node $k$ is produced in round $\tau_{k,i}$, and the sensor prepares $C$ copies of it. The copies are finite because a sensor prepares them only while the field it probes stays the same. A copy can be measured once and is then gone, and the node has no quantum memory to keep a copy for later. Fig.~\ref{fig:system} shows node $k$ splitting the copies of a state between training and a classical record. The node keeps classical storage and a classical link to the server, over which it uploads $P$ numbers per round during training.

The nodes train a QNN, an $n$-qubit circuit $U(\boldsymbol{\theta})$ with parameters $\boldsymbol{\theta}\in\mathbb{R}^{P}$, in which each parameter sets the angle of one rotation generated by a Pauli operator. Its output on a state is the expectation of an observable $O$ that acts on a few qubits and satisfies $\|O\|\le1$,
\begin{equation}
    f(\psi;\boldsymbol{\theta})\;=\;\bra{\psi}U^{\dagger}(\boldsymbol{\theta})\,O\,U(\boldsymbol{\theta})\ket{\psi}.
    \label{eq:output}
\end{equation}
For a loss $\ell(f,y)$ and a weight $\mu>0$, the objective on a set $\mathcal{D}'\subseteq\mathcal{D}$ is
\begin{equation}
    L(\boldsymbol{\theta};\mathcal{D}')\;=\;\sum_{(\psi,y)\in\mathcal{D}'}\ell\bigl(f(\psi;\boldsymbol{\theta}),y\bigr)+\frac{\mu}{2}\|\boldsymbol{\theta}\|^{2}.
    \label{eq:objective}
\end{equation}
Node $k$ holds the local objective $L_k$, the sum of $\ell$ over $\mathcal{D}_k$ plus $\frac{\mu}{2K}\|\boldsymbol{\theta}\|^{2}$, so that $\sum_k L_k(\boldsymbol{\theta})=L(\boldsymbol{\theta};\mathcal{D})$.

In round $t$, the server broadcasts $\boldsymbol{\theta}[t]$. Node $k$ estimates $\nabla L_k(\boldsymbol{\theta}[t])$ by the parameter-shift rule $\partial f/\partial\theta_j=\frac{1}{2}\bigl[f(\psi;\boldsymbol{\theta}+\frac{\pi}{2}\mathbf{e}_j)-f(\psi;\boldsymbol{\theta}-\frac{\pi}{2}\mathbf{e}_j)\bigr]$, where $\mathbf{e}_j$ is the $j$-th unit vector, and it measures fresh copies of its states on the shifted circuits. The server updates
\begin{equation}
    \boldsymbol{\theta}[t+1]\;=\;\boldsymbol{\theta}[t]-\eta\sum_{k=1}^{K}\hat{\nabla}L_k(\boldsymbol{\theta}[t])
    \label{eq:update}
\end{equation}
with the step size $\eta$ and the estimates $\hat{\nabla}L_k$ that the nodes upload. The server holds no state and sees only these classical uploads. In a round, a node uses a subset of its states and spends $2P$ shifted circuits on each, so training spends $C^{\textrm{tr}}$ copies of each state over all rounds, a number that grows with the rounds in which the state is used.

\subsection{Problem Formulation}
\label{subsec:problem-formulation}

At round $t^{\textrm{u}}$, a request names a forget set $\mathcal{D}^{\textrm{f}}\subseteq\mathcal{D}$. We consider three kinds. A node deletion names $\mathcal{D}_k$ when node $k$ withdraws. A window deletion names the states of node $k$ with $\tau_{k,i}$ in an interval of rounds, such as those produced while its sensor drifted. A sample deletion names any subset, such as the states whose owners withdraw consent. The part of $\mathcal{D}^{\textrm{f}}$ at node $k$ is $\mathcal{D}^{\textrm{f}}_k$, and the retain set is $\mathcal{D}^{\textrm{r}}=\mathcal{D}\setminus\mathcal{D}^{\textrm{f}}$ with parts $\mathcal{D}^{\textrm{r}}_k$. The server must serve a request before its next update, since every round on the old model spreads the forget data further. Let $\boldsymbol{\theta}^{\star}$ denote the model at $t^{\textrm{u}}$, which we take as a minimizer of $L(\boldsymbol{\theta};\mathcal{D})$. The target is the retrained model $\boldsymbol{\theta}^{\textrm{r}}=\arg\min_{\boldsymbol{\theta}}L(\boldsymbol{\theta};\mathcal{D}^{\textrm{r}})$, the model that the federation would hold had it never seen $\mathcal{D}^{\textrm{f}}$, against which membership inference finds nothing to detect. Retraining is not possible, since the copies of the retained states were consumed as well. The forget states cannot be measured again either, and the outcomes that training left behind hold only at the past parameters $\boldsymbol{\theta}[t]\neq\boldsymbol{\theta}^{\star}$, as Fig.~\ref{fig:record} shows. Each outcome is an eigenvalue of $O$ after a shifted circuit, and averaging them over the rounds that used a state estimates its gradient at a mixture of past models. Neither the gradient nor the curvature at $\boldsymbol{\theta}^{\star}$ follows from it.

A node could also keep its uploads $\hat{\nabla}L_k(\boldsymbol{\theta}[t])$ of every round, as federated unlearning from stored updates does. Each upload sums over all states that the node used in that round, so it cannot separate the forget states of a window or a sample deletion from the retained states of the same node. Subtracting the uploads removes the whole node, and the retained states of that node are lost with it.

What a node can do is record classical data from some copies of each state during training, before any request is known. Let $S$ copies of every state be set aside for this record. The server computes an unlearned model $\boldsymbol{\theta}^{\textrm{u}}$ from the records, and we seek the smallest record that serves the requests,
\begin{equation}
\begin{split}
    (\mathrm{P}):\;\;\min\;\;
    & S \\
    \mathrm{s.t.}\;\;
    & (\mathrm{C}_1):\;\Pr\bigl[\|\boldsymbol{\theta}^{\textrm{u}}-\boldsymbol{\theta}^{\textrm{r}}\|\le\varepsilon^{\textrm{u}}\bigr]\ge1-\delta, \\
    & (\mathrm{C}_2):\;S\le C-C^{\textrm{tr}}, \\
    & (\mathrm{C}_3):\;\boldsymbol{\theta}^{\textrm{u}}\text{ uses no copy of }\mathcal{D}^{\textrm{f}}\text{ after }t^{\textrm{u}},
\end{split}
\label{eq:problem}
\end{equation}
where $\varepsilon^{\textrm{u}}$ is the error target and $\delta$ the failure probability. The forget set is unknown during training, so every state is recorded, and $S$ sets both the copies taken from training and the storage per node. The distance in $(\mathrm{C}_1)$ also bounds every output. The parameter-shift rule gives $|\partial f/\partial\theta_j|\le\|O\|\le1$, so $f(\psi;\cdot)$ is $\sqrt{P}$-Lipschitz, and $|f(\psi;\boldsymbol{\theta}^{\textrm{u}})-f(\psi;\boldsymbol{\theta}^{\textrm{r}})|\le\sqrt{P}\,\varepsilon^{\textrm{u}}$ holds for every state, forgotten or retained.

%% file: section/4_method.tex
\section{\texorpdfstring{\method{}}{SHADE}: Shadow-based Data Erasure}
\label{sec:method}

Problem~$(\mathrm{P})$ in Eq.~\eqref{eq:problem} cannot be solved directly. A correction toward $\boldsymbol{\theta}^{\textrm{r}}$ needs the gradient of the forget data and the curvature of the retained data at $\boldsymbol{\theta}^{\star}$, which is known only at $t^{\textrm{u}}$. The records must be taken before any request, so they must serve every state and every future model, and they must stay small because every state needs one. To address these issues, we design \underline{\textbf{Sha}}dow-based \underline{\textbf{D}}ata \underline{\textbf{E}}rasure (\method{}), whose three modules are described in Sections~\ref{subsec:record}, \ref{subsec:estimate}, and~\ref{subsec:correct}. Fig.~\ref{fig:arch} shows where each module runs. The first runs at every node in every round, the second at every node once a request arrives, and the third at the server.

\subsection{Shadow Recording on the Light Cone}
\label{subsec:record}

This module records $S$ copies of every state during training. For copy $s$ of a state, the node draws for every qubit $q$ a single-qubit Clifford rotation $V_{q,s}$ uniformly from the three that map the $X$, $Y$, and $Z$ bases to the computational basis, and it measures the copy to obtain bits $b_{q,s}$. The snapshot
\begin{equation}
    \hat{\rho}_s\;=\;\bigotimes_{q=1}^{n}\bigl(3V_{q,s}^{\dagger}\ketbra{b_{q,s}}{b_{q,s}}V_{q,s}-I\bigr)
    \label{eq:snapshot}
\end{equation}
satisfies $\mathbb{E}[\hat{\rho}_s]=\ketbra{\psi}{\psi}$ over the random rotations and outcomes. Hence $\mathrm{Tr}[A\hat{\rho}_s]$ estimates $\bra{\psi}A\ket{\psi}$ for any observable $A$, including one chosen after the measurement.

The node does not keep all $n$ qubits. The backward light cone $\Lambda$ is the set of qubits that reach the support of $O$ through the gates of $U$, as drawn in Fig.~\ref{fig:cone}. Every gate outside $\Lambda$ cancels in $U^{\dagger}(\boldsymbol{\theta})OU(\boldsymbol{\theta})$ for every $\boldsymbol{\theta}$, and the factor of Eq.~\eqref{eq:snapshot} on a qubit outside the support of an observable has unit trace. The node therefore stores $(V_{q,s},b_{q,s})$ for $q\in\Lambda$ only, three bits per qubit and snapshot, or $3|\Lambda|S$ bits per state. With $|\Lambda|=6$ and $S=2000$, a state takes 4.5 kB. The snapshots are taken on copies that training does not use, so their randomness is independent of the trained model. They stay at the node and are never uploaded. The forget set is unknown during training, so every state gets the same $S$, set by Section~\ref{sec:analysis}.

On a real device, readout errors shrink the mean of $V_{q,s}^{\dagger}\ketbra{b_{q,s}}{b_{q,s}}V_{q,s}$ toward the maximally mixed state, and Eq.~\eqref{eq:snapshot} then no longer inverts it. The node calibrates this shrinkage per qubit by taking snapshots of the known state $\ket{0}^{\otimes n}$ in the same way. It replaces the factor 3 and the identity term of Eq.~\eqref{eq:snapshot} by the inverse of the calibrated map, which keeps the estimates unbiased as long as the readout errors do not depend on the measured state. The price is a larger variance, by at most the inverse square of the product of the shrinkage factors over $\Lambda$, and $S$ grows by the same factor.

\subsection{Estimation from Shadows}
\label{subsec:estimate}

This module runs at the nodes when the server broadcasts a request and $\boldsymbol{\theta}^{\star}$. Let $O(\boldsymbol{\theta})=U^{\dagger}(\boldsymbol{\theta})OU(\boldsymbol{\theta})$, so that $f(\psi;\boldsymbol{\theta})=\bra{\psi}O(\boldsymbol{\theta})\ket{\psi}$ by Eq.~\eqref{eq:output}. Applying the parameter-shift rule once and twice writes the derivatives of $f$ as expectations as well,
\begin{equation}
    \frac{\partial f}{\partial\theta_j}=\bra{\psi}A_j(\boldsymbol{\theta})\ket{\psi},\qquad
    \frac{\partial^{2}f}{\partial\theta_j\partial\theta_{j'}}=\bra{\psi}B_{jj'}(\boldsymbol{\theta})\ket{\psi},
    \label{eq:shift}
\end{equation}
where $A_j(\boldsymbol{\theta})=\frac{1}{2}\bigl[O(\boldsymbol{\theta}+\frac{\pi}{2}\mathbf{e}_j)-O(\boldsymbol{\theta}-\frac{\pi}{2}\mathbf{e}_j)\bigr]$ and $B_{jj'}(\boldsymbol{\theta})$ combines the four operators shifted by $\pm\frac{\pi}{2}$ in both $\theta_j$ and $\theta_{j'}$ with weights $\pm\frac{1}{4}$. A shift changes an angle and not the layout of the gates, so all of these operators act on $\Lambda$ only.

For every state it holds, a node estimates $f$, $\nabla f$, and $\nabla^{2}f$ at $\boldsymbol{\theta}^{\star}$ by the median of means of $\mathrm{Tr}[O(\boldsymbol{\theta}^{\star})\hat{\rho}_s]$, $\mathrm{Tr}[A_j(\boldsymbol{\theta}^{\star})\hat{\rho}_s]$, and $\mathrm{Tr}[B_{jj'}(\boldsymbol{\theta}^{\star})\hat{\rho}_s]$ over its snapshots. It then forms the gradient $\nabla\ell=\ell'\,\nabla f$ and the Hessian $\nabla^{2}\ell=\ell''\,\nabla f\,\nabla f^{\top}+\ell'\,\nabla^{2}f$ of the loss, where $\ell'$ and $\ell''$ are the derivatives of $\ell$ in $f$ at the estimated output. The two factors of $\nabla f\,\nabla f^{\top}$ come from disjoint halves of the snapshots, so their errors are independent. Node $k$ sums
\begin{equation}
    \hat{\mathbf{g}}^{\textrm{f}}_k=\sum_{(\psi,y)\in\mathcal{D}^{\textrm{f}}_k}\hat{\nabla}\ell,\qquad
    \hat{\mathbf{H}}^{\textrm{r}}_k=\sum_{(\psi,y)\in\mathcal{D}^{\textrm{r}}_k}\hat{\nabla}^{2}\ell
    \label{eq:local}
\end{equation}
and uploads $P$ and $P(P+1)/2$ numbers, where a node without forget data uploads only $\hat{\mathbf{H}}^{\textrm{r}}_k$. Each evaluation contracts a snapshot with an operator on $\Lambda$, which takes classical time exponential in $|\Lambda|$ and no quantum hardware. With $P=24$ and $S=2000$, a state needs $(P+1)^{2}S\approx1.25\times10^{6}$ contractions on six qubits, all independent and thus parallel.

\subsection{Federated Newton Correction and Erasure}
\label{subsec:correct}

This module runs at the server and closes the request. Let $\mathbf{g}^{\textrm{f}}$ denote the sum of $\nabla\ell$ over $\mathcal{D}^{\textrm{f}}$ at $\boldsymbol{\theta}^{\star}$. Since $\boldsymbol{\theta}^{\star}$ minimizes $L(\boldsymbol{\theta};\mathcal{D})$, the retain objective has the gradient $\nabla L(\boldsymbol{\theta}^{\star};\mathcal{D}^{\textrm{r}})=-\mathbf{g}^{\textrm{f}}$ there. One Newton step on $L(\boldsymbol{\theta};\mathcal{D}^{\textrm{r}})$ from $\boldsymbol{\theta}^{\star}$ thus undoes the pull of the forget data along the curvature of the retained data. The server sums the uploads into $\hat{\mathbf{g}}^{\textrm{f}}=\sum_k\hat{\mathbf{g}}^{\textrm{f}}_k$ and $\hat{\mathbf{H}}^{\textrm{r}}=\sum_k\hat{\mathbf{H}}^{\textrm{r}}_k+\mu\mathbf{I}$, symmetrizes $\hat{\mathbf{H}}^{\textrm{r}}$, raises its eigenvalues below $\mu$ to $\mu$, and sets
\begin{equation}
    \boldsymbol{\theta}^{\textrm{u}}\;=\;\boldsymbol{\theta}^{\star}+\bigl(\hat{\mathbf{H}}^{\textrm{r}}\bigr)^{-1}\hat{\mathbf{g}}^{\textrm{f}}.
    \label{eq:newton}
\end{equation}
It broadcasts $\boldsymbol{\theta}^{\textrm{u}}$, the nodes erase the snapshots of $\mathcal{D}^{\textrm{f}}$, and training resumes from $\boldsymbol{\theta}^{\textrm{u}}$ on new states. The erasure is part of the deletion, since a shadow describes the state. Snapshots of $\mathcal{D}^{\textrm{f}}$ left at a node could serve to learn the forget data again.

The uploads $\hat{\mathbf{g}}^{\textrm{f}}_k$ and $\hat{\mathbf{H}}^{\textrm{r}}_k$ are sums over states, of the same kind as the gradients that training uploads in every round, so the correction discloses no more about a single state to the server than training does. Requests that arrive in the same round are served together, with $\mathcal{D}^{\textrm{f}}$ the union of their forget sets. A later request evaluates the snapshots at a model that the earlier correction computed from them, which breaks the independence that Section~\ref{sec:analysis} requires. \method{} thus splits each state's snapshots into equal groups and uses a fresh group per request, so the record serves as many requests as it has groups.

The correction costs one round of uploads and one $P\times P$ solve, and it meets $(\mathrm{C}_3)$ since it uses no copy of any state.

\begin{figure*}[t]
  \centering
  \begin{minipage}[t]{0.48\textwidth}
    \centering
    \includegraphics[width=\linewidth]{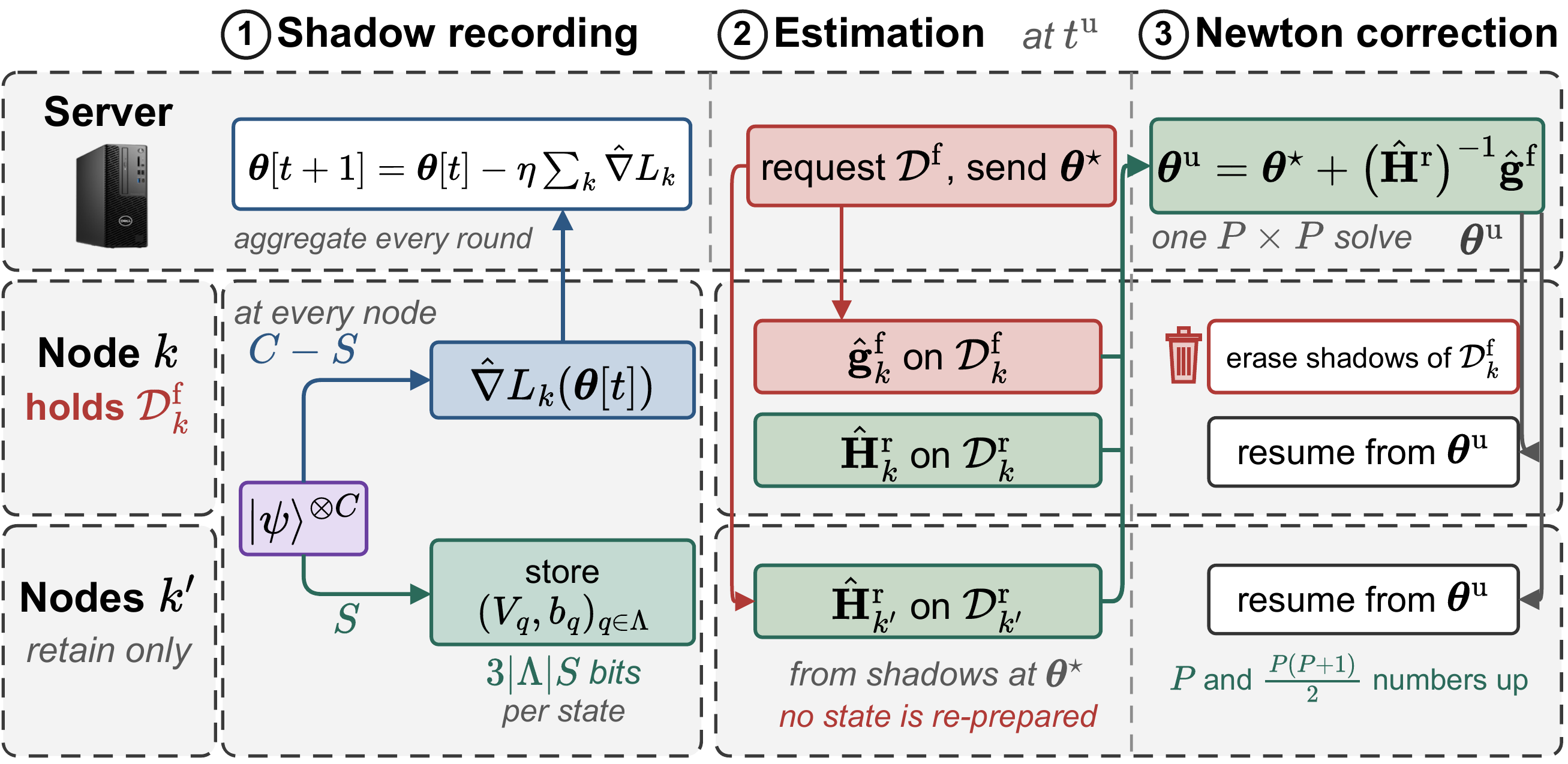}
    \caption{The architecture of \method{} at the nodes and the server.}
    \label{fig:arch}
  \end{minipage}
  \hfill
  \begin{minipage}[t]{0.48\textwidth}
    \centering
    \includegraphics[width=\linewidth]{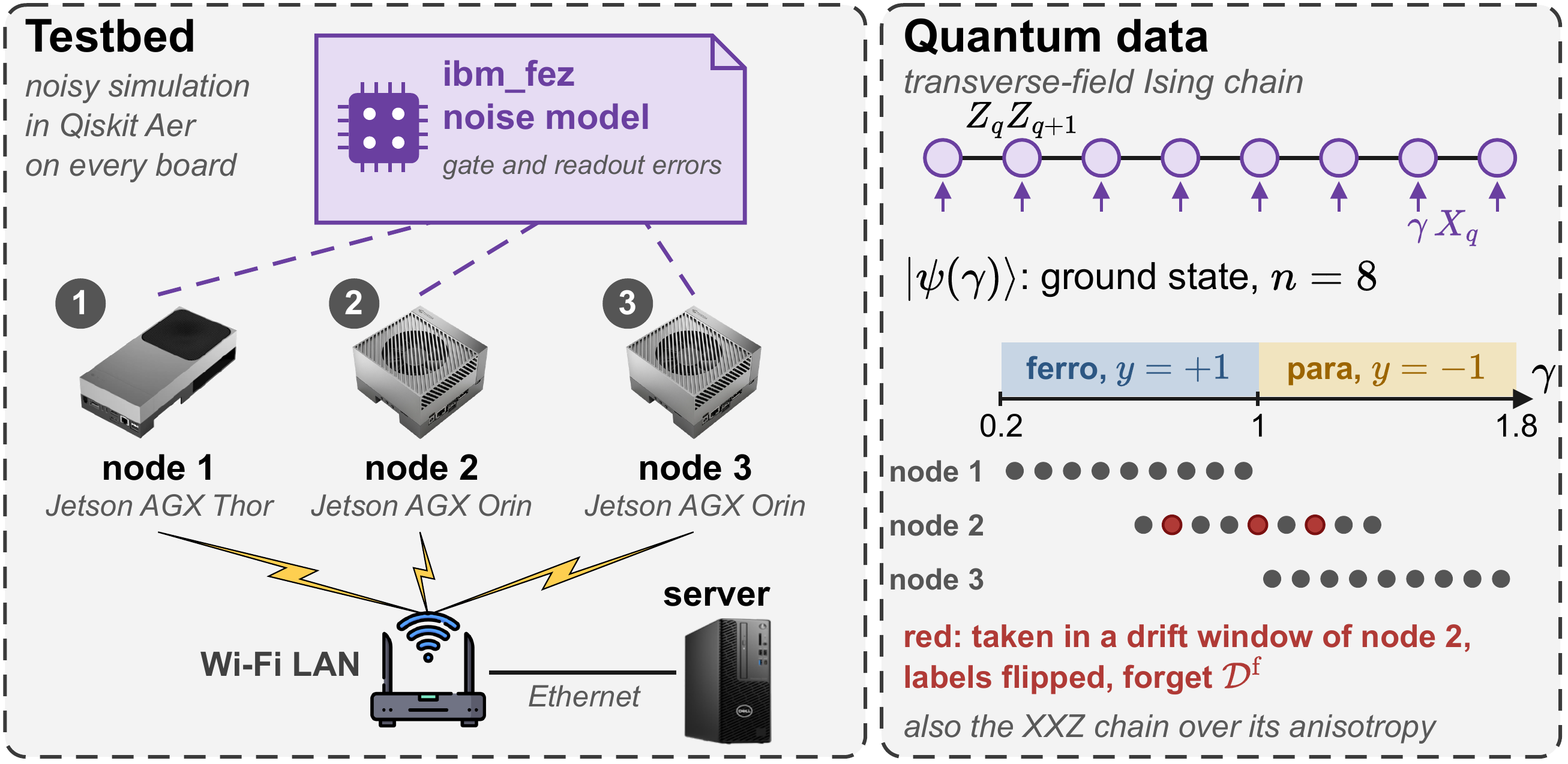}
    \caption{Testbed of edge boards with device noise, and the spin-chain data.}
    \label{fig:testbed}
  \end{minipage}
\end{figure*}

\begin{table*}[!t]
\centering
\setlength{\tabcolsep}{4pt}
\setlength{\aboverulesep}{0pt}
\setlength{\belowrulesep}{0pt}
\setlength{\extrarowheight}{0pt}
\renewcommand{\arraystretch}{0.88}
\caption{Forgetting, utility, and cost per method and request type. FA, RA, and TA are the accuracies in \% on the forget, retained, and held-out states, and FA should match Retrain. Dist and ODist are the parameter and output distances to Retrain, and MIA is the attack success rate in \%. Replay counts the copies of forget states used after the request, the storage is in kB per state, the uplink is in kB per node, and the time is in seconds. The testbed columns come from the node boards under the noise model of ibm\_fez. Methods marked \textup{\textdagger} are given fresh copies that a deployment does not have. A \swatch{bestcell} cell marks the best value and a \swatch{secondcell} cell the second best within each request type, excluding Retrain.}
\label{tab:main}
\resizebox{\linewidth}{!}{\begin{tabular}{l|ccc|ccc|cccc|cccc}
\Xhline{1.2pt}
 & \multicolumn{3}{c|}{\hd{hdrA}{Forgetting}} & \multicolumn{3}{c|}{\hd{hdrB}{Utility and privacy}} & \multicolumn{4}{c|}{\hd{hdrC}{Cost}} & \multicolumn{4}{c}{\hd{hdrD}{Testbed}}\\
\textbf{Method} & FA & Dist$\downarrow$ & ODist$\downarrow$ & RA$\uparrow$ & TA$\uparrow$ & MIA$\downarrow$ & Replay$\downarrow$ & Storage$\downarrow$ & Uplink$\downarrow$ & Time$\downarrow$ & FA & Dist$\downarrow$ & TA$\uparrow$ & Time$\downarrow$\\
\Xhline{1.2pt}
\multicolumn{15}{l}{\band{(a) Node deletion}}\\
\Xhline{0.6pt}
FedRecovery \venue{TIFS'23} & 31.4\std{21.9} & 0.288\std{0.044} & 0.0580\std{0.0344} & 99.8 & 93.6\std{2.2} & 50.2\std{1.0} & \best{0} & 4.90 & \best{$3.2{\times}10^{-4}$} & \best{0.01}\std{0.0006} & \second{22.9}\std{17.4} & 0.358\std{0.026} & 92.3\std{1.8} & \best{0.02}\std{0.004}\\
QMU$^{\dagger}$ \venue{PRApplied'26} & 52.4\std{1.9} & 0.217\std{0.040} & 0.0740\std{0.0011} & \best{100.0} & \best{96.1}\std{0.2} & \second{50.1}\std{0.7} & $3.9{\times}10^{5}$ & \best{$3.2{\times}10^{-3}$} & \second{0.02} & 99.1\std{0.002} & 51.4\std{1.2} & 0.268\std{0.040} & \second{95.3}\std{0.2} & 100\std{0.005}\\
SCRUB$^{\dagger}$ \venue{NeurIPS'23} & 49.1\std{1.0} & 0.203\std{0.042} & 0.0681\std{0.0041} & 99.6 & 95.4 & 50.2\std{0.2} & $2.7{\times}10^{6}$ & \best{$3.2{\times}10^{-3}$} & 2.84 & 693\std{0.008} & 47.8 & 0.259\std{0.043} & 95.1\std{0.2} & 703\std{0.06}\\
CR$^{\dagger}$ \venue{ICML'20} & \second{21.9}\std{2.5} & \best{0.192}\std{0.058} & \best{0.0151}\std{0.0063} & 99.2 & 92.2\std{0.4} & \best{49.7}\std{0.5} & $1.3{\times}10^{7}$ & \best{$3.2{\times}10^{-3}$} & 1.18 & 3226\std{0.009} & 20.1\std{3.1} & \best{0.192}\std{0.048} & 92.2\std{0.2} & 3227\std{0.007}\\
\Xhline{0.8pt}
\cellcolor{oursname}\textit{\textbf{\methodname{}}} & \best{26.4}\std{1.9} & \second{0.197}\std{0.063} & \second{0.0171}\std{0.0062} & \best{100.0} & \second{96.0}\std{0.4} & \second{50.1}\std{0.8} & \best{0} & 72.00 & 1.18 & \second{0.17}\std{0.002} & \best{24.6}\std{3.5} & \second{0.205}\std{0.077} & \best{95.8}\std{0.5} & \second{0.18}\std{0.007}\\
\Xhline{0.8pt}
\textit{Retrain}$^{\dagger}$ & 26.6\std{1.0} & 0.161\std{0.029} & 0.0012\std{0.0006} & 99.9 & 96.2\std{0.2} & 50.0\std{0.2} & $7.7{\times}10^{8}$ & $3.0{\times}10^{-3}$ & 37.63 & $1.9{\times}10^{5}$\std{1.0} & 24.8\std{2.0} & 0.170\std{0.014} & 96.0\std{0.2} & $1.9{\times}10^{5}$\std{0.9}\\
\Xhline{0.8pt}
\multicolumn{15}{l}{\band{(b) Window deletion}}\\
\Xhline{0.6pt}
FedRecovery \venue{TIFS'23} & 25.7\std{31.6} & 0.207\std{0.097} & 0.0987\std{0.0903} & 93.7\std{1.6} & 93.7\std{1.5} & 50.1 & \best{0} & 4.90 & \best{$8.0{\times}10^{-5}$} & \best{0.02}\std{0.007} & 25.2\std{35.4} & 0.288\std{0.142} & 91.8\std{2.5} & \best{0.02}\std{0.003}\\
QMU$^{\dagger}$ \venue{PRApplied'26} & 17.8\std{21.8} & 0.033\std{0.015} & 0.0055\std{0.0015} & 96.2\std{0.3} & \second{96.1}\std{0.2} & \best{49.9} & $9.8{\times}10^{4}$ & \best{$3.2{\times}10^{-3}$} & \second{0.02} & 25.0\std{0.002} & 12.3\std{17.7} & 0.034\std{0.012} & 95.9\std{0.2} & 25.5\std{0.01}\\
SCRUB$^{\dagger}$ \venue{NeurIPS'23} & \second{21.4}\std{24.5} & 0.037\std{0.012} & 0.0047\std{0.0035} & 96.1\std{0.5} & 96.0\std{0.5} & 50.1 & $2.5{\times}10^{6}$ & \best{$3.2{\times}10^{-3}$} & 2.90 & 619\std{0.008} & 20.7\std{29.5} & 0.041\std{0.012} & 96.0\std{0.8} & 629\std{0.07}\\
CR$^{\dagger}$ \venue{ICML'20} & 17.2\std{21.8} & \best{0.022}\std{0.012} & \best{0.0007}\std{0.0004} & \best{96.6}\std{0.4} & 95.9\std{0.4} & \best{49.9} & $1.3{\times}10^{7}$ & \best{$3.2{\times}10^{-3}$} & 1.20 & 3226\std{0.02} & \second{15.2}\std{23.6} & \best{0.031}\std{0.015} & \second{96.2}\std{0.5} & 3227\std{0.007}\\
\Xhline{0.8pt}
\cellcolor{oursname}\textit{\textbf{\methodname{}}} & \best{19.7}\std{21.8} & \second{0.026}\std{0.018} & \second{0.0009}\std{0.0005} & \second{96.5}\std{0.4} & \best{96.4}\std{0.4} & 50.1 & \best{0} & 72.00 & 1.20 & \second{0.18}\std{0.003} & \best{16.3}\std{23.6} & \second{0.033}\std{0.018} & \best{96.6}\std{0.6} & \second{0.18}\std{0.006}\\
\Xhline{0.8pt}
\textit{Retrain}$^{\dagger}$ & 20.0\std{24.5} & 0.152\std{0.022} & 0.0018\std{0.0006} & 96.7\std{0.4} & 96.5\std{0.4} & 50.0 & $7.8{\times}10^{8}$ & $3.0{\times}10^{-3}$ & 38.40 & $2.0{\times}10^{5}$\std{0.8} & 16.7\std{23.6} & 0.169\std{0.037} & 96.7\std{0.2} & $2.0{\times}10^{5}$\std{2.0}\\
\Xhline{0.8pt}
\multicolumn{15}{l}{\band{(c) Sample deletion}}\\
\Xhline{0.6pt}
FedRecovery \venue{TIFS'23} & 52.3\std{13.0} & 0.678\std{0.573} & 0.3305\std{0.3144} & 81.7\std{16.4} & 82.0\std{16.1} & 50.2\std{0.4} & \best{0} & 4.90 & \best{$8.0{\times}10^{-4}$} & \best{0.02}\std{0.006} & 53.9\std{15.5} & 1.037\std{0.611} & 74.8\std{17.8} & \best{0.02}\std{0.002}\\
QMU$^{\dagger}$ \venue{PRApplied'26} & 52.2\std{2.5} & 0.035\std{0.022} & 0.0043\std{0.0032} & 95.6 & 96.0\std{0.2} & 50.2\std{0.7} & $7.8{\times}10^{5}$ & \best{$3.2{\times}10^{-3}$} & \second{0.34} & 198\std{0.002} & 58.3\std{2.4} & 0.039\std{0.027} & 96.0\std{0.2} & 201\std{0.007}\\
SCRUB$^{\dagger}$ \venue{NeurIPS'23} & \second{51.6}\std{2.5} & 0.046\std{0.020} & 0.0149\std{0.0071} & 95.9\std{0.3} & \second{96.5}\std{0.2} & \best{49.9}\std{0.2} & $3.1{\times}10^{6}$ & \best{$3.2{\times}10^{-3}$} & 3.22 & 792\std{0.009} & 58.4\std{2.4} & 0.052\std{0.026} & \second{96.2}\std{0.2} & 803\std{0.09}\\
CR$^{\dagger}$ \venue{ICML'20} & 52.5\std{4.0} & \best{0.025}\std{0.015} & \best{0.0011}\std{0.0002} & \best{96.7}\std{0.2} & 96.1\std{0.2} & \second{50.0}\std{0.2} & $1.3{\times}10^{7}$ & \best{$3.2{\times}10^{-3}$} & 1.23 & 3226\std{0.009} & \second{51.1}\std{2.4} & \best{0.037}\std{0.025} & 96.1\std{0.2} & 3227\std{0.009}\\
\Xhline{0.8pt}
\cellcolor{oursname}\textit{\textbf{\methodname{}}} & \best{29.3}\std{4.0} & \second{0.029}\std{0.013} & \second{0.0015}\std{0.0003} & \second{96.6}\std{0.2} & \best{96.9}\std{0.2} & \second{50.0}\std{0.4} & \best{0} & 72.00 & 1.23 & \second{0.18}\std{0.007} & \best{30.1}\std{2.4} & \second{0.038}\std{0.013} & \best{96.5}\std{0.2} & \second{0.19}\std{0.003}\\
\Xhline{0.8pt}
\textit{Retrain}$^{\dagger}$ & 29.6\std{4.0} & 0.194\std{0.014} & 0.0012\std{0.0006} & 96.8\std{0.1} & 97.1\std{0.2} & 49.9\std{0.2} & $7.4{\times}10^{8}$ & $3.0{\times}10^{-3}$ & 38.40 & $1.9{\times}10^{5}$\std{1.1} & 30.2\std{2.4} & 0.173\std{0.017} & 96.7\std{0.2} & $1.9{\times}10^{5}$\std{1.7}\\
\Xhline{1.2pt}
\end{tabular}}
\end{table*}

\section{Theoretical Analysis}
\label{sec:analysis}

\begin{proposition}\label{prop:unbiased}
For $\boldsymbol{\theta}$ chosen independently of the snapshots of a state $\ket{\psi}$, the averages of $\mathrm{Tr}[O(\boldsymbol{\theta})\hat{\rho}_s]$, $\mathrm{Tr}[A_j(\boldsymbol{\theta})\hat{\rho}_s]$, and $\mathrm{Tr}[B_{jj'}(\boldsymbol{\theta})\hat{\rho}_s]$ over its snapshots are unbiased estimates of $f(\psi;\boldsymbol{\theta})$ and of its first and second derivatives.
\end{proposition}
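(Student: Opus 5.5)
The plan has three steps: show that one snapshot is an unbiased estimator of the state, pass this through the trace using linearity and the independence of $\boldsymbol{\theta}$, and then identify the resulting expectations with $f$ and its derivatives via the parameter-shift identities of Eq.~\eqref{eq:shift}.

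\emph{Step 1: a single snapshot is unbiased.} I would check the product factors of Eq.~\eqref{eq:snapshot} one qubit at a time. For a single qubit with reduced input $\sigma$, average $3V^{\dagger}\ketbra{b}{b}V-I$ over the three basis rotations and the Born-rule outcomes. This average equals $3\mathcal{M}(\sigma)-I$, where $\mathcal{M}(\sigma)=\frac{1}{3}\sum_{P\in\{X,Y,Z\}}\frac{1}{2}\bigl(I+\mathrm{Tr}[P\sigma]P\bigr)$. Hence $\mathcal{M}(\sigma)=\frac{1}{2}I+\frac{1}{3}\bigl(\sigma-\frac{1}{2}I\bigr)$, and therefore $3\mathcal{M}(\sigma)-I=\sigma$.

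For an entangled $\ket{\psi}$ the joint outcome distribution is not a product, so I would not argue qubit by qubit on the state. Instead I would use that the rotations are drawn independently per qubit. The overall channel is then $\mathcal{M}^{\otimes n}$, and its inverse $(3\,\cdot-I)^{\otimes n}$ is applied factorwise. Linearity then gives $\mathbb{E}[\hat{\rho}_s]=\ketbra{\psi}{\psi}$. The same argument restricted to $\Lambda$ shows that the stored factors on $\Lambda$ average to the reduced state $\rho_\Lambda$. The dropped factors have unit trace, so they do not matter.

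\emph{Step 2: linearity with a fixed observable.} Fix $\boldsymbol{\theta}$. Because $\boldsymbol{\theta}$ is independent of the snapshots, I would condition on it; the snapshot distribution is unchanged by this conditioning. Each operator $O(\boldsymbol{\theta})$, $A_j(\boldsymbol{\theta})$, $B_{jj'}(\boldsymbol{\theta})$ is then a fixed bounded Hermitian operator $W$. By linearity of trace and expectation, $\mathbb{E}\,\mathrm{Tr}[W\hat{\rho}_s]=\mathrm{Tr}[W\,\mathbb{E}\hat{\rho}_s]=\bra{\psi}W\ket{\psi}$. Averaging over $s$ preserves this, and taking the outer expectation over $\boldsymbol{\theta}$ keeps the conditional unbiasedness.

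\emph{Step 3: identifying the expectations with $f$ and its derivatives.} I would verify Eq.~\eqref{eq:shift}. Each parameter enters as $e^{-i\theta_j G_j/2}$ with $G_j$ a Pauli operator, so $O(\boldsymbol{\theta})$ is a trigonometric polynomial of the form $a+b\cos\theta_j+c\sin\theta_j$ in each $\theta_j$. For such functions the $\pm\pi/2$ shift rule is exact, which gives $\partial_j O(\boldsymbol{\theta})=A_j(\boldsymbol{\theta})$ as an operator identity. Applying the rule again to $A_j$ in $\theta_{j'}$ gives $B_{jj'}$. This also covers $j=j'$: the second derivative is $\frac{1}{4}[O(\boldsymbol{\theta}+\pi\mathbf{e}_j)+O(\boldsymbol{\theta}-\pi\mathbf{e}_j)-2O(\boldsymbol{\theta})]$, which the four-term formula reproduces. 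The argument needs one caveat: if a parameter is reused in several gates, the rule must be applied per gate, so I would rely on the stated assumption that each parameter sets one rotation. Taking $\bra{\psi}\cdot\ket{\psi}$ of these operator identities finishes the proof.

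The main obstacle is Step 1 for entangled states, where one must argue at the level of the product measurement channel rather than on the state itself. The independence condition in Step 2 is the other point needing care, since it is what lets $\boldsymbol{\theta}$ be treated as fixed.
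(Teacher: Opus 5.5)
Your proposal is correct and follows the paper's proof: invert the single-qubit measurement channel $\frac{1}{3}(\rho+\mathrm{Tr}[\rho]\,I)$, use the independent per-qubit rotations to get $\mathbb{E}[\hat{\rho}_s]=\ketbra{\psi}{\psi}$ even for entangled states, and pass through the trace at fixed $\boldsymbol{\theta}$; your Step~3 only adds a derivation of Eq.~\eqref{eq:shift}, which the paper takes from Section~\ref{subsec:estimate}. One small fix: for the tensor-product step, write $\mathcal{M}(X)=\frac{1}{3}(X+\mathrm{Tr}[X]\,I)$ and its inverse $X\mapsto 3X-\mathrm{Tr}[X]\,I$ as linear maps, as the paper does, because the affine forms $\frac{1}{2}(I+\mathrm{Tr}[P\sigma]P)$ and $3\,\cdot-I$ have no well-defined tensor power on entangled operators.
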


\begin{proof}
On one qubit, measuring a state $\rho$ after a uniformly random $V_{q,s}$ yields $V_{q,s}^{\dagger}\ketbra{b_{q,s}}{b_{q,s}}V_{q,s}$ with mean $\frac{1}{3}(\rho+\mathrm{Tr}[\rho]\,I)$, and the map $X\mapsto3X-\mathrm{Tr}[X]\,I$ inverts this mean. The rotations and outcomes are drawn qubit by qubit, so the product in Eq.~\eqref{eq:snapshot} inverts the $n$-qubit mean, and $\mathbb{E}[\hat{\rho}_s]=\ketbra{\psi}{\psi}$ holds for any state, entangled or not. The trace is linear, and Eqs.~\eqref{eq:output} and~\eqref{eq:shift} write $f$ and its derivatives as expectations of $O(\boldsymbol{\theta})$, $A_j(\boldsymbol{\theta})$, and $B_{jj'}(\boldsymbol{\theta})$. Since $\boldsymbol{\theta}$ does not depend on the snapshots, the expectation over the snapshots may be taken at fixed $\boldsymbol{\theta}$.\end{proof}

\begin{theorem}\label{thm:size}
Let the snapshots of every state be combined by the median of means. There is an absolute constant $\alpha$ such that with
\begin{equation}
    S\;=\;\Bigl\lceil\alpha\,\frac{4^{|\Lambda|}}{\varepsilon^{2}}\,\log\frac{|\mathcal{D}|(P+1)^{2}}{\delta}\Bigr\rceil
    \label{eq:size}
\end{equation}
snapshots per state, the estimates of $f$, $\partial f/\partial\theta_j$, and $\partial^{2}f/\partial\theta_j\partial\theta_{j'}$ for all states in $\mathcal{D}$ and all $j,j'$ are within $\varepsilon$ of their values at a fixed $\boldsymbol{\theta}$ with probability at least $1-\delta$. $S$ does not depend on $n$.
\end{theorem}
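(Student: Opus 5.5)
We combine Proposition~\ref{prop:unbiased} with a variance bound for each single-snapshot estimator, apply median of means to each quantity, and finish with a union bound. By Proposition~\ref{prop:unbiased}, each of $X_s=\mathrm{Tr}[M\hat{\rho}_s]$ with $M\in\{O(\boldsymbol{\theta}),A_j(\boldsymbol{\theta}),B_{jj'}(\boldsymbol{\theta})\}$ is an unbiased estimate of the target quantity at the fixed $\boldsymbol{\theta}$. The snapshots of a state are i.i.d., so the standard median-of-means guarantee applies. With $S=KN$ snapshots split into $K$ batches of size $N$, Chebyshev's inequality makes each batch mean fall within $\varepsilon$ with probability at least $3/4$ once $N\ge 4\mathrm{Var}[X_s]/\varepsilon^2$. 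A Hoeffding bound on the number of bad batches then puts the median within $\varepsilon$ except with probability $e^{-K/8}$. It therefore suffices to bound $\mathrm{Var}[X_s]$ by a constant times $4^{|\Lambda|}$, independently of $n$.

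\textbf{Locality.} Every $M$ above acts nontrivially only on $\Lambda$. For $O(\boldsymbol{\theta})$ this follows from the light-cone argument of Section~\ref{subsec:record}, and for $A_j$ and $B_{jj'}$ it holds because they are linear combinations of shifted conjugates $O(\boldsymbol{\theta}')$ with the same gate layout. Write $M=M_\Lambda\otimes I$. The factors of Eq.~\eqref{eq:snapshot} outside $\Lambda$ have unit trace, so $X_s=\mathrm{Tr}[M_\Lambda\hat{\rho}_{s,\Lambda}]$, where $\hat{\rho}_{s,\Lambda}$ is the tensor product of the factors on $\Lambda$ only. This is the step that removes $n$.

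\textbf{Variance.} Use the crude bound $\mathrm{Var}[X_s]\le\mathbb{E}[X_s^2]\le\max_s|X_s|^2$. Each single-qubit factor $3V^\dagger\ketbra{b}{b}V-I$ has eigenvalues $2$ and $-1$, so its operator norm is $2$. Hence $\|\hat{\rho}_{s,\Lambda}\|\le 2^{|\Lambda|}$, and the factor is rank one up to the identity shift. A bound in operator norm times trace norm is wasteful, so the cleaner route is the Pauli expansion $M_\Lambda=\sum_{P}c_P P$. For a single Pauli string $P$ of weight $w$, $\mathrm{Tr}[P\hat{\rho}_{s,\Lambda}]\in\{0,\pm3^{w}\}$ and its second moment is exactly $3^w$. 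The Huang--Kueng--Preskill shadow-norm bound then gives $\mathbb{E}[X_s^2]\le 4^{|\Lambda|}\|M_\Lambda\|^2$ for a $|\Lambda|$-local observable, which I would cite or rederive from that expansion.

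\textbf{Bounding $\|M_\Lambda\|$.} We have $\|O(\boldsymbol{\theta})\|\le1$ and $\|A_j\|\le1$. For $B_{jj'}$, the four operators with weights $\pm\frac14$ give $\|B_{jj'}\|\le1$. This yields $\mathrm{Var}[X_s]\le4^{|\Lambda|}$ uniformly over all quantities.

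\textbf{Union bound.} Take $N=\lceil 16\cdot4^{|\Lambda|}/\varepsilon^2\rceil$ and $K=\lceil 8\log(|\mathcal{D}|(P+1)^2/\delta)\rceil$. The failure probability per quantity is then at most $\delta/(|\mathcal{D}|(P+1)^2)$. Per state there are $1+P+P^2\le(P+1)^2$ quantities (counting ordered pairs $(j,j')$ suffices), so a union bound over all states and quantities gives total failure at most $\delta$. Absorbing the constants into $\alpha$ yields Eq.~\eqref{eq:size}. Since the formula contains only $|\Lambda|$, $\varepsilon$, $|\mathcal{D}|$, $P$ and $\delta$, it is independent of $n$.

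\textbf{Main obstacle.} The delicate step is the variance bound $\mathbb{E}[X_s^2]\le4^{|\Lambda|}\|M_\Lambda\|^2$ for a general, non-Pauli observable on $\Lambda$. The naive estimate through $\mathrm{Tr}|M_\Lambda|$ would give $8^{|\Lambda|}$ or worse, so the proof must use the Pauli-twirl structure of random single-qubit Clifford measurements. I would first try to invoke the shadow-norm bound of~\cite{Huang2020NatPhys} directly. Failing that, I would compute $\mathbb{E}[X_s^2]$ explicitly by expanding $M_\Lambda$ in Paulis, using that for each qubit the cross terms between distinct non-identity Paulis average to zero. This reduces the second moment to $\sum_{P,P'}c_Pc_{P'}\,\mathrm{Tr}[\cdots]$ terms weighted by $3^{\text{overlap}}$. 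I would then bound that sum by $4^{|\Lambda|}\|M_\Lambda\|^2$ qubit by qubit.
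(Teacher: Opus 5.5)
Your proposal is correct and follows the paper's proof: you restrict $O(\boldsymbol{\theta})$, $A_j(\boldsymbol{\theta})$ and $B_{jj'}(\boldsymbol{\theta})$ to $\Lambda$, bound their operator norms by $1$ because the shift weights have absolute values summing to at most one, take the variance bound $4^{|\Lambda|}\|M\|^{2}$ for random Pauli snapshots, combine by median of means, and finish with a union bound over at most $|\mathcal{D}|(P+1)^{2}$ estimates. Beyond the paper, you make explicit the median-of-means constants and the unit-trace reduction of the snapshot to $\Lambda$; like the paper, you rest the key $4^{|\Lambda|}$ variance step on the shadow-norm bound of Huang, Kueng, and Preskill rather than rederiving it.
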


\begin{proof}
By Section~\ref{subsec:estimate}, $O(\boldsymbol{\theta})$, $A_j(\boldsymbol{\theta})$, and $B_{jj'}(\boldsymbol{\theta})$ act on $\Lambda$ only. Each is a sum of unitary conjugates of $O$ whose weights have absolute values summing to at most one, so its operator norm is at most $\|O\|\le1$. For random Pauli measurements, the variance of $\mathrm{Tr}[A\hat{\rho}_s]$ for an operator $A$ on $m$ qubits is at most $4^{m}\|A\|^{2}$, and the median of means of $\alpha\,4^{m}\|A\|^{2}\log(1/\delta')/\varepsilon^{2}$ snapshots is within $\varepsilon$ of the mean with probability $1-\delta'$. Each state needs $1+P+P(P+1)/2\le(P+1)^{2}$ estimates, and a union bound over the $|\mathcal{D}|(P+1)^{2}$ estimates with $\delta'=\delta/(|\mathcal{D}|(P+1)^{2})$ gives Eq.~\eqref{eq:size}.\end{proof}

According to \textbf{Theorem}~\ref{thm:size}, the depth of the circuit rather than its width sets the record. Each brickwork layer adds at most two qubits to $\Lambda$ and thus multiplies $S$ by at most 16. The same locality lets a classical processor compute a shallow QNN from measured data of its inputs. It does not make the device dispensable, since the states exist only at the node and every snapshot and training outcome must be taken there. A deeper circuit is trained on the device as before, and only its record grows, by the factor above per layer.

\begin{proposition}\label{prop:error}
Suppose that $L(\cdot;\mathcal{D}^{\textrm{r}})$ is $\kappa$-strongly convex with an $M$-Lipschitz Hessian on a ball that contains $\boldsymbol{\theta}^{\star}$ and $\boldsymbol{\theta}^{\textrm{r}}$, and let $\mathbf{H}^{\textrm{r}}=\nabla^{2}L(\boldsymbol{\theta}^{\star};\mathcal{D}^{\textrm{r}})$. If $\|\hat{\mathbf{g}}^{\textrm{f}}-\mathbf{g}^{\textrm{f}}\|\le\varepsilon^{\mathbf{g}}$ and the matrix in Eq.~\eqref{eq:newton} satisfies $\|\hat{\mathbf{H}}^{\textrm{r}}-\mathbf{H}^{\textrm{r}}\|\le\varepsilon^{\mathbf{H}}<\kappa$, then
\begin{equation}
    \|\boldsymbol{\theta}^{\textrm{u}}-\boldsymbol{\theta}^{\textrm{r}}\|\;\le\;\frac{M\|\mathbf{g}^{\textrm{f}}\|^{2}}{2\kappa^{3}}+\frac{\varepsilon^{\mathbf{g}}}{\kappa-\varepsilon^{\mathbf{H}}}+\frac{\varepsilon^{\mathbf{H}}\|\mathbf{g}^{\textrm{f}}\|}{\kappa(\kappa-\varepsilon^{\mathbf{H}})}.
    \label{eq:error}
\end{equation}
\end{proposition}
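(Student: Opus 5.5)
The plan is to compare $\boldsymbol{\theta}^{\textrm{u}}$ with the \emph{exact} Newton step $\tilde{\boldsymbol{\theta}}=\boldsymbol{\theta}^{\star}+(\mathbf{H}^{\textrm{r}})^{-1}\mathbf{g}^{\textrm{f}}$ and split the error by the triangle inequality,
\begin{equation*}
\|\boldsymbol{\theta}^{\textrm{u}}-\boldsymbol{\theta}^{\textrm{r}}\|\le\|\tilde{\boldsymbol{\theta}}-\boldsymbol{\theta}^{\textrm{r}}\|+\bigl\|(\hat{\mathbf{H}}^{\textrm{r}})^{-1}\hat{\mathbf{g}}^{\textrm{f}}-(\mathbf{H}^{\textrm{r}})^{-1}\mathbf{g}^{\textrm{f}}\bigr\|.
\end{equation*}
The first term is a deterministic Newton-convergence term and produces $M\|\mathbf{g}^{\textrm{f}}\|^{2}/(2\kappa^{3})$. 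The second term is the estimation error and produces the last two terms of Eq.~\eqref{eq:error}.

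For the Newton term, write $L^{\textrm{r}}=L(\cdot;\mathcal{D}^{\textrm{r}})$. As in Section~\ref{subsec:correct}, $\nabla L^{\textrm{r}}(\boldsymbol{\theta}^{\star})=-\mathbf{g}^{\textrm{f}}$ because $\boldsymbol{\theta}^{\star}$ minimizes $L(\cdot;\mathcal{D})$, and $\nabla L^{\textrm{r}}(\boldsymbol{\theta}^{\textrm{r}})=0$. Hence
\begin{equation*}
\tilde{\boldsymbol{\theta}}-\boldsymbol{\theta}^{\textrm{r}}=(\mathbf{H}^{\textrm{r}})^{-1}\Bigl[\mathbf{H}^{\textrm{r}}(\boldsymbol{\theta}^{\star}-\boldsymbol{\theta}^{\textrm{r}})-\bigl(\nabla L^{\textrm{r}}(\boldsymbol{\theta}^{\star})-\nabla L^{\textrm{r}}(\boldsymbol{\theta}^{\textrm{r}})\bigr)\Bigr].
\end{equation*}
I would write the gradient difference as $\int_0^1\nabla^{2}L^{\textrm{r}}(\boldsymbol{\theta}^{\textrm{r}}+s(\boldsymbol{\theta}^{\star}-\boldsymbol{\theta}^{\textrm{r}}))\,ds\,(\boldsymbol{\theta}^{\star}-\boldsymbol{\theta}^{\textrm{r}})$. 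The segment lies in the ball because a ball is convex, so the $M$-Lipschitz Hessian bounds the bracket by $\frac{M}{2}\|\boldsymbol{\theta}^{\star}-\boldsymbol{\theta}^{\textrm{r}}\|^{2}$. Strong convexity on the ball gives $\mathbf{H}^{\textrm{r}}\succeq\kappa\mathbf{I}$, so $\|(\mathbf{H}^{\textrm{r}})^{-1}\|\le1/\kappa$. It also gives $\kappa\|\boldsymbol{\theta}^{\star}-\boldsymbol{\theta}^{\textrm{r}}\|^{2}\le\langle\nabla L^{\textrm{r}}(\boldsymbol{\theta}^{\star})-\nabla L^{\textrm{r}}(\boldsymbol{\theta}^{\textrm{r}}),\boldsymbol{\theta}^{\star}-\boldsymbol{\theta}^{\textrm{r}}\rangle$, whence $\|\boldsymbol{\theta}^{\star}-\boldsymbol{\theta}^{\textrm{r}}\|\le\|\mathbf{g}^{\textrm{f}}\|/\kappa$. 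Combining these yields $\|\tilde{\boldsymbol{\theta}}-\boldsymbol{\theta}^{\textrm{r}}\|\le M\|\mathbf{g}^{\textrm{f}}\|^{2}/(2\kappa^{3})$.

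For the estimation term, I would split
\begin{equation*}
(\hat{\mathbf{H}}^{\textrm{r}})^{-1}\hat{\mathbf{g}}^{\textrm{f}}-(\mathbf{H}^{\textrm{r}})^{-1}\mathbf{g}^{\textrm{f}}=(\hat{\mathbf{H}}^{\textrm{r}})^{-1}(\hat{\mathbf{g}}^{\textrm{f}}-\mathbf{g}^{\textrm{f}})+(\hat{\mathbf{H}}^{\textrm{r}})^{-1}(\mathbf{H}^{\textrm{r}}-\hat{\mathbf{H}}^{\textrm{r}})(\mathbf{H}^{\textrm{r}})^{-1}\mathbf{g}^{\textrm{f}}.
\end{equation*}
The hypothesis is stated for the matrix actually used in Eq.~\eqref{eq:newton}, that is, after symmetrization and eigenvalue clipping. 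That matrix is symmetric, and Weyl's inequality together with $\mathbf{H}^{\textrm{r}}\succeq\kappa\mathbf{I}$ and $\varepsilon^{\mathbf{H}}<\kappa$ gives $\lambda_{\min}(\hat{\mathbf{H}}^{\textrm{r}})\ge\kappa-\varepsilon^{\mathbf{H}}>0$. Hence $\|(\hat{\mathbf{H}}^{\textrm{r}})^{-1}\|\le1/(\kappa-\varepsilon^{\mathbf{H}})$. The two pieces are then bounded by $\varepsilon^{\mathbf{g}}/(\kappa-\varepsilon^{\mathbf{H}})$ and $\varepsilon^{\mathbf{H}}\|\mathbf{g}^{\textrm{f}}\|/(\kappa(\kappa-\varepsilon^{\mathbf{H}}))$, which are exactly the remaining terms of Eq.~\eqref{eq:error}.

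The only delicate point I expect is the Newton term. The Lipschitz and convexity hypotheses hold only on the ball, so every point used must stay inside it. The integral runs only over the segment between $\boldsymbol{\theta}^{\star}$ and $\boldsymbol{\theta}^{\textrm{r}}$, which the ball contains by convexity, so nothing outside is needed. The iterate $\tilde{\boldsymbol{\theta}}$ itself never needs to lie in the ball, since no Hessian is evaluated there. Everything else is linear algebra.
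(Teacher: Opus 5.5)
Your proposal is correct and follows the paper's proof essentially step for step. You compare $\boldsymbol{\theta}^{\textrm{u}}$ with the exact Newton step, bound that step's error by $\frac{M}{2\kappa}\|\boldsymbol{\theta}^{\star}-\boldsymbol{\theta}^{\textrm{r}}\|^{2}$ together with $\|\boldsymbol{\theta}^{\star}-\boldsymbol{\theta}^{\textrm{r}}\|\le\|\mathbf{g}^{\textrm{f}}\|/\kappa$, and split the estimation error with the same resolvent identity and Weyl's inequality; the only difference is that you derive the quadratic Newton bound explicitly through the integral form of the gradient difference, which the paper simply asserts.
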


\begin{proof}
Let $\bar{\boldsymbol{\theta}}=\boldsymbol{\theta}^{\star}+(\mathbf{H}^{\textrm{r}})^{-1}\mathbf{g}^{\textrm{f}}$. Since $\nabla L(\boldsymbol{\theta}^{\star};\mathcal{D}^{\textrm{r}})=-\mathbf{g}^{\textrm{f}}$, $\bar{\boldsymbol{\theta}}$ is one exact Newton step on $L(\cdot;\mathcal{D}^{\textrm{r}})$, so $\|\bar{\boldsymbol{\theta}}-\boldsymbol{\theta}^{\textrm{r}}\|\le\frac{M}{2\kappa}\|\boldsymbol{\theta}^{\star}-\boldsymbol{\theta}^{\textrm{r}}\|^{2}$. Strong convexity gives $\kappa\|\boldsymbol{\theta}^{\star}-\boldsymbol{\theta}^{\textrm{r}}\|\le\|\nabla L(\boldsymbol{\theta}^{\star};\mathcal{D}^{\textrm{r}})\|=\|\mathbf{g}^{\textrm{f}}\|$, which yields the first term. Next, $\boldsymbol{\theta}^{\textrm{u}}-\bar{\boldsymbol{\theta}}=(\hat{\mathbf{H}}^{\textrm{r}})^{-1}(\hat{\mathbf{g}}^{\textrm{f}}-\mathbf{g}^{\textrm{f}})+(\hat{\mathbf{H}}^{\textrm{r}})^{-1}(\mathbf{H}^{\textrm{r}}-\hat{\mathbf{H}}^{\textrm{r}})(\mathbf{H}^{\textrm{r}})^{-1}\mathbf{g}^{\textrm{f}}$. Weyl's inequality gives $\|(\hat{\mathbf{H}}^{\textrm{r}})^{-1}\|\le1/(\kappa-\varepsilon^{\mathbf{H}})$ and $\|(\mathbf{H}^{\textrm{r}})^{-1}\|\le1/\kappa$, and the triangle inequality completes the bound.\end{proof}

\begin{corollary}\label{cor:square}
For $\ell=(f-y)^{2}/2$ and $\varepsilon\le1$, the event of \textbf{Theorem}~\ref{thm:size} at $\boldsymbol{\theta}^{\star}$ implies $\varepsilon^{\mathbf{g}}\le4\sqrt{P}\,|\mathcal{D}^{\textrm{f}}|\,\varepsilon$ and $\varepsilon^{\mathbf{H}}\le7P\,|\mathcal{D}^{\textrm{r}}|\,\varepsilon$.
\end{corollary}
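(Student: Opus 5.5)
The plan is a direct perturbation argument, entry by entry, on the estimated gradient and Hessian of the loss. For $\ell=(f-y)^{2}/2$ we have $\ell'=f-y$ and $\ell''=1$. Hence a single state contributes $\nabla\ell=(f-y)\nabla f$ and $\nabla^{2}\ell=\nabla f\,\nabla f^{\top}+(f-y)\nabla^{2}f$. Its estimates are the same expressions with $f,\nabla f,\nabla^{2}f$ replaced by their median-of-means estimates $\hat f,\hat{\nabla}f,\hat{\nabla}^{2}f$ of Section~\ref{subsec:estimate}. On the event of \textbf{Theorem}~\ref{thm:size} at $\boldsymbol{\theta}^{\star}$, every estimated quantity is within $\varepsilon$ of its true value. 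For the outer product this needs the event on both disjoint halves, which I would absorb into the constant $\alpha$. I will also use the a priori bounds:
\begin{itemize}
\item $|f|\le\|O\|\le1$, and hence $|f-y|\le2$;
\item $|\partial f/\partial\theta_j|\le1$;
\item $|\partial^{2}f/\partial\theta_j\partial\theta_{j'}|\le1$, since $A_j$ and $B_{jj'}$ have norm at most one (shown in the proof of \textbf{Theorem}~\ref{thm:size});
\item each estimate is at most $1+\varepsilon\le2$ in magnitude, because $\varepsilon\le1$.
\end{itemize}

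\emph{Gradient.} For coordinate $j$ of one state I would write
\[
\bigl|(\hat f-y)\hat{\partial}_jf-(f-y)\partial_jf\bigr|\le|\hat f-f|\,|\hat{\partial}_jf|+|f-y|\,|\hat{\partial}_jf-\partial_jf|.
\]
This is at most $\varepsilon(1+\varepsilon)+2\varepsilon\le4\varepsilon$. Every one of the $P$ coordinates obeys this bound, so the Euclidean error per state is at most $4\sqrt{P}\,\varepsilon$. The triangle inequality over the states of $\mathcal{D}^{\textrm{f}}$ then gives $\varepsilon^{\mathbf{g}}\le4\sqrt{P}\,|\mathcal{D}^{\textrm{f}}|\,\varepsilon$.

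\emph{Hessian.} For entry $(j,j')$ of one state I would split the error into two parts. The outer-product part gives
\[
|\hat a_j\hat a_{j'}-a_ja_{j'}|\le|\hat a_j-a_j|\,|\hat a_{j'}|+|a_j|\,|\hat a_{j'}-a_{j'}|\le\varepsilon(1+\varepsilon)+\varepsilon\le3\varepsilon .
\]
The curvature part is bounded exactly like the gradient, by $\varepsilon(1+\varepsilon)+2\varepsilon\le4\varepsilon$. Each entry is therefore off by at most $7\varepsilon$. A $P\times P$ matrix with entries bounded by $7\varepsilon$ has spectral norm at most its Frobenius norm, which is at most $7P\varepsilon$. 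Summing over $\mathcal{D}^{\textrm{r}}$ bounds the error of $\sum_k\hat{\mathbf{H}}^{\textrm{r}}_k$ by $7P|\mathcal{D}^{\textrm{r}}|\varepsilon$. The added $\mu\mathbf{I}$ appears in both $\hat{\mathbf{H}}^{\textrm{r}}$ and $\mathbf{H}^{\textrm{r}}$, so it cancels.

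\emph{Post-processing (the main obstacle).} The statement concerns the matrix actually used in Eq.~\eqref{eq:newton}, after symmetrization and eigenvalue raising. Symmetrization is the orthogonal projection onto symmetric matrices, and $\mathbf{H}^{\textrm{r}}$ is symmetric. The projection is nonexpansive in Frobenius norm, so it does not increase the Frobenius error, and the entrywise argument above already controls that error by $7P|\mathcal{D}^{\textrm{r}}|\varepsilon$.

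Raising the eigenvalues below $\mu$ to $\mu$ is the Frobenius projection onto the convex set $\{X\succeq\mu\mathbf{I}\}$. It is nonexpansive whenever $\mathbf{H}^{\textrm{r}}$ lies in that set. I would try to justify this from the strong convexity assumed in \textbf{Proposition}~\ref{prop:error}, which gives $\mathbf{H}^{\textrm{r}}\succeq\kappa\mathbf{I}$, together with the regime $\kappa\ge\mu$. If that inclusion fails, I would instead argue in spectral norm on each eigenvalue. The clipping only moves an eigenvalue $\hat\lambda<\mu$ upward, so its distance to the corresponding eigenvalue of $\mathbf{H}^{\textrm{r}}$ does not grow, provided that eigenvalue is itself at least $\mu$. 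This is the one step where I expect to need an extra condition made explicit.
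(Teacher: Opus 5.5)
Your proposal is correct and is essentially the paper's proof. It uses the same product-rule splitting, giving $4\varepsilon$ per gradient entry and $3\varepsilon+4\varepsilon$ per Hessian entry, then the factors $\sqrt{P}$ and $P$ with the Frobenius norm bounding the spectral norm, and the sum over states. It ends with the same observation that symmetrizing and clipping form a Frobenius projection onto a convex set containing $\mathbf{H}^{\textrm{r}}$, which the paper also justifies only under the extra condition $\mu\le\kappa$.

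Your care about the disjoint halves in $\nabla f\,\nabla f^{\top}$ goes slightly beyond the paper. Drop the eigenvalue-by-eigenvalue fallback: comparing eigenvalues does not bound $\|\hat{\mathbf{H}}^{\textrm{r}}-\mathbf{H}^{\textrm{r}}\|$, and clipping need not be nonexpansive in spectral norm, so the bound must go through the Frobenius norm as in your main argument.
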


\begin{proof}
Here $\nabla\ell=(f-y)\nabla f$ and $\nabla^{2}\ell=\nabla f\,\nabla f^{\top}+(f-y)\nabla^{2}f$, with $|f-y|\le2$ and every derivative of $f$ at most one in magnitude. Each entry of the estimated gradient is thus within $\varepsilon(1+\varepsilon)+2\varepsilon\le4\varepsilon$, and each entry of the estimated Hessian within $3\varepsilon+4\varepsilon$. Norms add a factor $\sqrt{P}$ for a gradient and $P$ for the Frobenius norm of a Hessian, which bounds its spectral norm, and Eq.~\eqref{eq:local} adds the errors of its states. For $\mu\le\kappa$, symmetrizing and raising the eigenvalues to $\mu$ is the Frobenius projection onto a convex set that contains $\mathbf{H}^{\textrm{r}}$, which does not increase the error.\end{proof}

According to \textbf{Proposition}~\ref{prop:error}, the first term is the price of one Newton step and does not shrink with $S$, while by \textbf{Corollary}~\ref{cor:square} the other two are linear in $\varepsilon$. Hence $(\mathrm{C}_1)$ holds once $\varepsilon$ is a fixed fraction of the gap $\varepsilon^{\textrm{u}}-M\|\mathbf{g}^{\textrm{f}}\|^{2}/(2\kappa^{3})$, with $S$ from Eq.~\eqref{eq:size}. When the gap is negative, the corrected model needs further training on new states.

%% file: section/5_experiments.tex
\section{Performance Evaluation}
\label{sec:exp}

\subsection{Experimental Setup}
\label{subsec:setup}

\begin{figure*}[!t]
\centering
\subfloat[Snapshots per state.]{\includegraphics[width=0.24\textwidth]{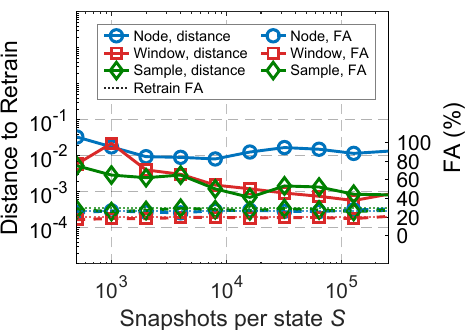}\label{fig:snap}}
\hfill
\subfloat[Circuit layers.]{\includegraphics[width=0.24\textwidth]{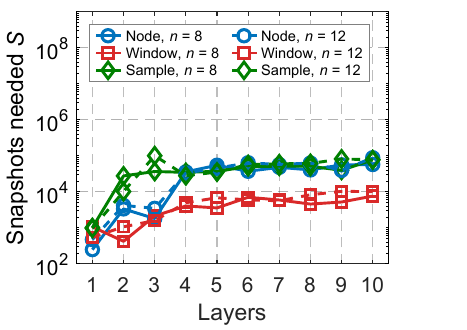}\label{fig:depth}}
\hfill
\subfloat[Copy split.]{\includegraphics[width=0.24\textwidth]{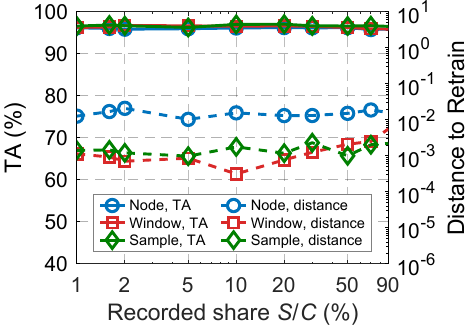}\label{fig:split}}
\hfill
\subfloat[Forget-set size.]{\includegraphics[width=0.24\textwidth]{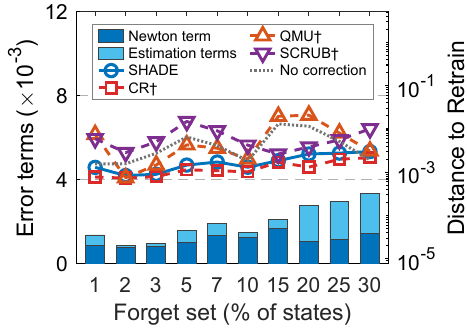}\label{fig:size}}
\caption{Results of \method{}. (a) Distance to Retrain and forget accuracy as the snapshots per state grow. (b) Snapshots needed for a fixed distance as the number of layers grows. (c) Test accuracy and distance to Retrain as the share of recorded copies grows. (d) Distance to Retrain as the forget set grows, with the Newton and estimation terms of \method{}.}
\label{fig:results}
\end{figure*}

We simulate a server and $K=50$ nodes in PennyLane with shot sampling at $n=8$ qubits. The states are ground states of the transverse-field Ising chain $-\sum_{q}(Z_qZ_{q+1}+\gamma X_q)$ at 400 values of $\gamma\in[0.2,1.8]$, labeled by the ferromagnetic phase $\gamma<1$ or the paramagnetic phase $\gamma>1$, and of the XXZ chain across its anisotropy, following the TensorFlow Quantum datasets on a finer grid. Each node holds the 8 states of one range of the coupling. The QNN is the circuit of Fig.~\ref{fig:cone} with $P=24$, $O=Z_4$, and $|\Lambda|=6$. It is trained on $\ell=(f-y)^{2}/2$ with $\mu=0.01$ and $\eta=0.05$ for 60 rounds, on 8 states per node and round with 20 shots per shifted circuit. Each state has $C=2\times10^{6}$ copies, and $S=3.2\times10^{4}$ are recorded unless stated otherwise. The request arrives at $t^{\textrm{u}}=60$. Node deletion removes one node, and sample deletion removes 5\% of all states. Window deletion removes the fifth of a node's states produced while its sensor drifted, whose labels are flipped. On the testbed in Fig.~\ref{fig:testbed}, three nodes run on a Jetson AGX Thor and two Jetson AGX Orin boards, and the server on a desktop, over a Wi-Fi LAN. Each board simulates the states, the training shots, and the snapshots of its node in Qiskit Aer under the noise model of the 156-qubit IBM processor ibm\_fez, which carries its gate and readout errors.

We measure forgetting by the accuracy $\mathrm{FA}$ on the forget states, which should match that of Retrain, and by the distance $\|\boldsymbol{\theta}^{\textrm{u}}-\boldsymbol{\theta}^{\textrm{r}}\|$. The output distance $\mathrm{ODist}$ is the root mean square of $f(\psi;\boldsymbol{\theta}^{\textrm{u}})-f(\psi;\boldsymbol{\theta}^{\textrm{r}})$ over the training and held-out states. Utility is the accuracy $\mathrm{RA}$ on the retained states and $\mathrm{TA}$ on held-out states, and privacy is the success rate $\mathrm{MIA}$ of a membership inference attack for QNNs on the forget states. The cost counts the copies of forget states used after the request, the snapshot storage per state, the uplink per node, and the time of the correction. We compare \method{} with four baselines. FedRecovery removes the contribution of a node from its stored updates without new measurements, so it removes every node that holds a forget state. QMU runs gradient ascent on the forget states, SCRUB distills the trained model while pushing its outputs on the forget states away, and CR applies an exact Newton correction. These three need fresh copies of the forget states, which a deployment does not have, and we grant them to show what replay achieves. Retrain trains from scratch on fresh copies of the retained states.

\subsection{Experimental Results}
\label{subsec:results}

\begin{figure}[t]
\centering
\subfloat[Number of qubits.]{\includegraphics[width=0.485\linewidth]{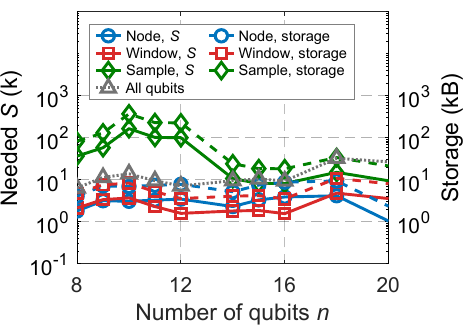}\label{fig:qubits}}
\hfill
\subfloat[Number of nodes.]{\includegraphics[width=0.485\linewidth]{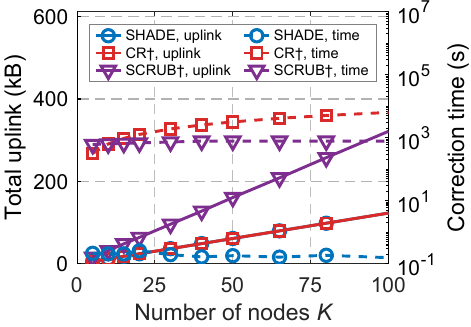}\label{fig:nodes}}
\caption{Scalability of \method{}. (a) Snapshots needed and storage per state as the number of qubits grows at three layers. (b) Uplink and correction time as the number of nodes grows.}
\label{fig:scalability}
\end{figure}

Table~\ref{tab:main} compares the methods. \method{} keeps the forget accuracy within 0.5 points of Retrain for all request types, in simulation and on the testbed. Its distance to Retrain is second only to CR, which uses exact derivatives from fresh copies. FedRecovery removes whole nodes, so under window and sample deletion its distance is at least seven times that of \method{}. Under sample deletion, QMU, SCRUB, and CR leave the forget accuracy near 52\%, against 29.6\% for Retrain. The attack success rate stays near 50\% for every method. \method{} replays no copy and corrects the model in under 0.2~s, against 3226~s for CR. Its uplink equals that of CR, and its cost is a classical record of 72~kB per state at each node.

In Fig.~\ref{fig:results}, the distance under window and sample deletion falls six- to sevenfold as $S$ grows. Under node deletion it stays near $10^{-2}$, as expected when the Newton term of Eq.~\eqref{eq:error} dominates. The snapshots needed are of the same order at $n=8$ and $n=12$, as \textbf{Theorem}~\ref{thm:size} predicts. The test accuracy stays above 95.7\% for recorded shares up to 90\%. The distance of \method{} stays within about twice that of CR as the forget set grows, and from 20\% on the estimation terms exceed the Newton term. In Fig.~\ref{fig:scalability}, node and window deletion need at most $4.9\times10^{3}$ snapshots up to 20 qubits. The uplink grows by 1.23~kB per node, and \method{} corrects in about 0.2~s at any number of nodes, against up to 6452~s for CR.

%% file: section/6_conclusion.tex
\section{Conclusion}
\label{sec:conclusion}

We studied how a QFL deployment forgets quantum data that can no longer be measured. \method{} records classical shadows on the light cone of the measured observable during training and removes the forget data by a federated Newton correction estimated from them, so that no state has to be prepared again. With 50 nodes, in simulation and on edge boards, it matches retraining in forget accuracy within 0.5 points and corrects the model in under 0.2~s.

%% file: main.bbl